\newtheorem{lem}{Lemma}
\renewcommand{\root}{\textrm{root}}
\newcommand{\OPT}{\textrm{OPT}}
\newcommand{\DP}{\texttt{DP}}
\title{PTAS and Exact Algorithms for $r$-Gathering Problems on Tree}
\author{Soh Kumabe}{The University of Tokyo \\ RIKEN AIP}{soh\_kumabe@mist.i.u-tokyo.ac.jp}{}{}
\author{Takanori Maehara}{RIKEN AIP}{takanori.maehara@riken.jp}{}{}
\authorrunning{S. Kumabe and T. Maehara}
\keywords{$r$-Gathering Problem; Tree; Polynomial-Time Approximation Scheme}
\begin{document}
\maketitle

\begin{abstract}
$r$-gathering problem is a variant of facility location problems. In this problem, we are given a set of users and a set of facilities on same metric space. We open some of the facilities and assign each user to an open facility, so that at least $r$ users are assigned to every open facility. We aim to minimize the maximum distance between user and assigned facility. In general, this problem is NP-hard and admit an approximation algorithm with factor $3$ \cite{armon2011min}. It is known that the problem does not admit any approximation algorithm within a factor less than $3$ \cite{armon2011min}. In our another paper, we proved that this problem is NP-hard even on spider, which is a special case of tree metric \cite{wareware}. In this paper, we concentrate on the problems on a tree. First, we give a PTAS for $r$-gathering problem on a tree. Furthermore, we give PTAS for some variants of the problems on a tree, and also give exact polynomial-time algorithms for another variants of $r$-gathering problem on a tree.

\end{abstract}

\section{Introduction}

\subsection{Background and Motivation}

In \emph{min-max $r$-gathering problem}, we are given a metric space $\mathcal{M}$ that contains several users $\mathcal{U}$ and facilities $\mathcal{F}$.
We can open some facilities and assign each user to an opened facility so that each opened facility has so least $r$ users.
The objective of the problem is to minimize the maximum distance between the facilities and the assigned users~\cite{armon2011min}.

This problem has an application in shelter evacuation problem~\cite{akagi2015r}: 
There are people and evacuation shelters, and we divide the people into shelters so that all people can evacuate in minimum possible time.
Each shelter must have at least $r$ people to maintain their lives in shelters.
The problem also has an application to privacy protection~\cite{sweeney2002k}.
A set of clusters satisfies \emph{$k$-anonymity} if each cluster has at least $k$ users; this condition prevents reconstructing personal information from the clustering. 

Several tractability and intractability results are known.
There is a polynomial-time $3$-approximation algorithm for a general metric space $\mathcal{M}$,
and no better approximation ratio can be achieved unless P=NP~\cite{armon2011min}.
If $\mathcal{M}$ is a line, we can solve the problem exactly by dynamic programming (DP)~\cite{akagi2015r,han2016r,nakano2018simple}, where the fastest algorithm runs in linear-time~\cite{sarker2019r}.
When $\mathcal{M}$ is a \emph{spider}, which is a metric space constructed by joining half-lines at their endpoints, Ahmed et al.~\cite{ahmed2019r} proposed a fixed-parameter tractable algorithm parameterized by $r$ and the degree of the center.
In our co-submitted paper~\cite{wareware}, the authors showed the problem is NP-hard if $\mathcal{M}$ is a spider, and the problem admits a fixed-parameter tractable algorithm parameterized by $r$.

\subsection{Our Contribution}

The goal of this study is to explore the boundary of tractability of the min-max $r$-gathering problem.
Specifically, we consider the problem on \emph{tree}, which is a natural graph class that contains spiders as a subclass.

It is easy to see that the problem does not admit a fully polynomial-time approximation scheme (FPTAS) (see Proposition~\ref{nofptas}).
Therefore, the best-possible positive result that we can expect is a polynomial-time approximation scheme (PTAS). 
Our main contribution is to establish PTAS for this problem as follows.
\begin{theorem}
\label{ptas}
There exists an algorithm for the min-max $r$-gathering problem on a tree so that for any $\epsilon>0$ it outputs a solution with an approximation ratio of $1+\epsilon$ in $(|\mathcal{U}|+|\mathcal{F}|)^{O(1/\epsilon)}$ time.
\end{theorem}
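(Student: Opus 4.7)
The plan is to reduce the optimization to a sequence of feasibility tests and solve each via a bottom-up tree DP with a bucketed state. First I would guess $\OPT$ by trying each of the $O(|\mathcal{U}|\cdot |\mathcal{F}|)$ user--facility distances (one of them must equal $\OPT$); fixing such a candidate $d^*$, the task becomes: decide whether an $r$-gathering assignment of max distance at most $(1+\epsilon)d^*$ exists.

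For a fixed $d^*$, I would root the tree arbitrarily and, for each vertex $v$, compute the set of achievable \emph{profiles} across the edge from $v$ to its parent. A profile summarises a partial solution in the subtree rooted at $v$ by recording, for each of $\lceil 1/\epsilon\rceil$ distance buckets of width $\epsilon d^*$: (i)~the number of users in the subtree still awaiting assignment, bucketed by their distance to $v$; and (ii)~the number of open facility slots in the subtree that can still absorb a user from above, bucketed by the facility's distance to $v$. At $v$ the DP merges the children's profiles one at a time (adding counts bucket by bucket), optionally opens a facility at $v$ when $v\in\mathcal{F}$ and matches at least $r$ users to it, then shifts all buckets by the length of the parent edge, discarding any configuration that would push some user past $(1+\epsilon)d^*$. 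A profile at the root with no pending users and no unfilled facility slots certifies feasibility.

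The number of profiles per vertex is at most $(|\mathcal{U}|+|\mathcal{F}|+1)^{O(1/\epsilon)}$, since each of $O(1/\epsilon)$ bucket entries takes at most $|\mathcal{U}|+|\mathcal{F}|+1$ values, and the merge at each vertex is a pointwise sum over pairs of profiles. Summed over the tree, the total running time is $(|\mathcal{U}|+|\mathcal{F}|)^{O(1/\epsilon)}$, as required.

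The main obstacle I expect is ensuring that the $\epsilon d^*$ rounding at each bucketing step does not accumulate along root-to-leaf paths of depth $\Theta(n)$. I would resolve this by always rounding distances \emph{upward} when placing users or slots into buckets, so that any profile the DP deems feasible corresponds to a true solution with max distance at most $(1+\epsilon)d^*$. For the converse, given an exact optimum of cost $\leq d^*$, I would exhibit a compatible profile sequence in which each user's rounding slack is charged once globally rather than once per edge traversed, so that total distortion remains within $(1+\epsilon)$. Making this single-charge accounting precise -- so that the exponent stays $O(1/\epsilon)$ rather than blowing up to $\mathrm{poly}(n/\epsilon)$ -- is the core calculation underlying the theorem.
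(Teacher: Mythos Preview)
Your overall strategy---reduce to a decision problem and run a bottom-up tree DP whose state records, in $O(1/\epsilon)$ distance buckets, the counts of unassigned users and of external-user commitments to open facilities---is exactly the paper's approach. Your outer loop (enumerate all $|\mathcal{U}|\cdot|\mathcal{F}|$ candidate distances) is a harmless variant of the paper's binary search over $[B/3,B]$ obtained from a $3$-approximation.

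The gap is precisely the obstacle you flag: controlling rounding error so it does not accumulate along root--leaf paths. Your proposed fix (round up when bucketing, then argue a single global charge for the converse) is the right instinct but is not yet a proof; as stated, shifting buckets by an unrounded edge length and re-bucketing at every vertex can still accrue $\Theta(\text{depth})$ units of slack, and you have not specified a mechanism that prevents this. The paper sidesteps the issue by rounding \emph{once, globally}: set $t=\Theta(\epsilon d^*)$ and replace each edge length $l(v,w)$ (with $v$ the parent) by $l'(v,w)=\lfloor d(\root,w)/t\rfloor-\lfloor d(\root,v)/t\rfloor$. This yields an integer-weighted instance $\mathcal{I}'$ in which every vertex lies at integer distance from the root; for any two vertices $v,w$ the rounded distance differs from the true distance by at most $2t$ via the lowest common ancestor, so $|\OPT(\mathcal{I})-t\cdot\OPT(\mathcal{I}')|\le 2t$ with no dependence on depth. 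One then runs your DP \emph{exactly} on $\mathcal{I}'$ with integer threshold $K=O(1/\epsilon)$, and the state count $(n+1)^{O(K)}$ delivers the claimed running time. This global from-the-root rounding is the concrete realisation of the ``single-charge accounting'' you were looking for.
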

The proposed algorithm seeks the optimal value by a binary search, and in each step, it solves the corresponding decision problem by a DP on a tree.
Here, the most difficult part is establishing an algorithm for the decision problem.

This technique can also be applied to other problems, for example, $(r,\epsilon)$-gathering problem and $r$-gathering problem with a constraint on the number of open facilities. It can also be shown that these problems are NP-hard and do not admit FPTAS unless P=NP by the same reduction. Thus, these are also tight results.

On the other hand, there are variants of $r$-gathering, which can be solved exactly in polynomial time on a tree. We provide polynomial time algorithms via DP for two problems: min-sum $r$-gathering problem and min-max (and min-sum) $r$-gathering with proximity requirement.

\subsection{Organization}

The rest of the paper is organized as follows. In section 2, we give a PTAS for min-max $r$-gathering problem on a tree. We also show the problems which admit essentially same PTAS. In subsection 3.1, we provide the polynomial-time algorithm which solves the min-sum version of $r$-gathering problem exactly on a tree. Finally, in 3.2, we provide the polynomial-time algorithm which solves the min-max (and min-sum) $r$-gathering with proximity requirement exactly on a tree.

\section{PTAS for min-max $r$-Gathering on Tree}

A \emph{weighted tree} $T = (V(T), E(T); l)$ is an undirected connected graph without cycles, where $V(T)$ is the set of vertices, $E(T)$ is the set of edges, and $l \colon E(T) \to \mathbb{R}_+$ is the non-negative edge length. 
$T$ forms a metric space by the tree metric $d(v, w)$, which is the sum of the edge lengths on the unique simple $v$-$w$ path for any vertices $v, w \in V(T)$.
We consider the min-max $r$-clustering problem on this metric space.

Without loss of generality, we assume that all users and facilities are located on different vertices; otherwise, we add new vertices connected with edges of length zero and separate the users/facilities into the new vertices. 
By performing similar operations, we also assume that $T$ is a rooted full binary tree rooted at a special vertex $root$ (that is, we can make $T$ to the rooted tree so that every vertex has zero or two children).
These operations only increase the number of vertices (and edges) of tree by a constant factor; these do not affect the time complexity of our algorithms.
We denote the subtree of $T$ rooted at $v$ by $T_v$.

\subsection{Hardness of the Problem}

We first see that the problem does not admit FPTAS. 
This is a simple consequence of our co-submitted paper~\cite{wareware} that proves the NP-hardness of the problem on a spider.
\begin{proposition}
\label{nofptas}
There is no FPTAS for the min-max $r$-gathering problem on a spider unless P=NP.
\end{proposition}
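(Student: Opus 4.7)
The plan is to derive non-existence of an FPTAS directly from the NP-hardness of min-max $r$-gathering on a spider proved in the co-submitted paper~\cite{wareware}, using the standard trick that an FPTAS with sufficiently small $\epsilon$ must produce the exact optimum whenever the set of achievable objective values is discrete and polynomially bounded.

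First, I would revisit the reduction of~\cite{wareware} and verify that it can be taken to produce spider instances in which every edge length is a positive integer of magnitude polynomial in $|\mathcal{U}|+|\mathcal{F}|$. Under the tree metric this ensures that every pairwise distance $d(u,f)$ is a positive integer bounded by some polynomial $M = M(|\mathcal{U}|+|\mathcal{F}|)$, and hence that $\mathrm{OPT}$, being the maximum of finitely many such distances over feasible assignments, is itself an integer in $\{1, 2, \dots, M\}$.

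Next, assume for contradiction that an FPTAS exists: for every $\epsilon > 0$ it returns, in time polynomial in the input size and $1/\epsilon$, a feasible value $\mathrm{ALG}$ with $\mathrm{ALG} \le (1+\epsilon)\,\mathrm{OPT}$. Choose $\epsilon = 1/(M+1)$, which is still inverse-polynomial in the input length, so the algorithm runs in polynomial time. Then
\[
\mathrm{ALG} \le \Bigl(1 + \tfrac{1}{M+1}\Bigr)\mathrm{OPT} < \mathrm{OPT} + 1.
\]
Since $\mathrm{ALG}$ is achieved by a feasible assignment and therefore equals $\max_u d(u,f(u))$ for integer distances, $\mathrm{ALG}$ is itself an integer, and combined with $\mathrm{ALG} \ge \mathrm{OPT}$ one concludes $\mathrm{ALG} = \mathrm{OPT}$. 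This would solve the NP-hard problem exactly in polynomial time, contradicting $P \neq NP$.

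The only substantive step is the first one: confirming that the reduction of~\cite{wareware} (or a trivial rescaling of it) has integer edge weights of polynomial magnitude. This is the expected situation for any NP-hardness reduction from a combinatorial problem such as $3$-partition or $3$-SAT, so the verification should be routine; if some weights were rational with small denominators, multiplying through by a common denominator preserves both NP-hardness and the polynomial bound on $M$, and the argument above goes through unchanged.
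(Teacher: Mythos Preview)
Your argument is correct and matches the paper's proof essentially line for line: both invoke the NP-hardness reduction of~\cite{wareware} with integral edge lengths and polynomially bounded diameter, then choose $\epsilon$ inverse-polynomial in that bound so that a hypothetical FPTAS would compute the integer optimum exactly. The paper simply asserts the integrality and $O(n+m)$ diameter bound as properties of the reduction in~\cite{wareware}, whereas you phrase it as something to be verified, but the substance is identical.
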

\begin{proof}
In \cite{wareware}, the authors proved that the min-max $r$-gathering problem is NP-hard even if the input is a spider and the edge lengths are integral, and the diameter of the spider is bounded by $O(n + m)$. 
Let us take such an instance.
If there is a FPTAS for the min-max $r$-gathering problem on a spider, 
by taking $\epsilon = 1 / (c(n + m))$ for sufficiently large constant $c$, we get an optimal solution because the optimal value is an integer at most $O(n+m)$.
This contradicts to the hardness.
\end{proof}

\subsection{Algorithm. Part 1: Binary Search}

In the following sections, we develop a PTAS for the problem.
We employ a standard practice for min-max problems: we guess the optimal value by binary search and solve the corresponding decision problem for the feasibility of the problem whose objective value is at most the guessed optimal value.


First, we run Armon et al.'s $3$-approximation algorithm~\cite{armon2011min} to obtain $B$ such that $B/3 \leq \OPT(\mathcal{I})\leq B$ holds.
Then we set $[B/3, B]$ as the range for the binary search.
This part is needed to run the algorithm in strongly polynomial-time.

For the binary search, we design the following oracle $\texttt{Solve}(\mathcal{I},b,\delta)$: Given an instance $\mathcal{I}$, threshold $b$, and positive number $\delta$, it reports YES if $\OPT(\mathcal{I}) \le (1 + \delta) b$, and NO if $\OPT(\mathcal{I}) > b$.
If $b < \OPT(\mathcal{I}) \le (1 + \delta) b$ then both answer is acceptable.
Our oracle also outputs the corresponding solution as a certificate if it returns YES.
Note that we cannot set $\delta = 0$ since it reduces to the decision version of the min-max $r$-gathering problem, which is NP-hard on a tree~\cite{wareware}.


\begin{algorithm}
\caption{PTAS for the Min-Max $r$-Gathering Problem on Tree}    
\label{alg}                       
\begin{algorithmic}
\REQUIRE An instance of the min-max $r$-gathering on a tree $\mathcal{I}$, positive number $\epsilon$
\STATE Run $3$-approximation algorithm for $\mathcal{I}$ and let the optimal value be $B$
\STATE $b_1=\frac{B}{3}, b_2=B$
\WHILE{$b_2-b_1>\frac{\epsilon}{9}B$}
    \STATE $b=\frac{b_1+b_2}{2}$
    \IF{$\texttt{Solve}(\mathcal{I},b,\frac{\epsilon}{2})$ returns YES}
        \STATE $b_2=b$
    \ELSE
        \STATE $b_1=b$
    \ENDIF
\ENDWHILE
\ENSURE the solution of $\texttt{Solve}(\mathcal{I},b_2,\frac{\epsilon}{2})$
\end{algorithmic}
\end{algorithm}

If we have such oracle, we can construct a PTAS as shown in Algorithm~\ref{alg}. 
The correctness of this algorithm is as follows.

\begin{lemma}
Assume that there is a deterministic strongly polynomial time oracle $\texttt{Solve}$ described above. Then, Algorithm~\ref{alg} gives a solution to the min-max $r$-gathering problem whose cost is at most $(1+\epsilon)\OPT(\mathcal{I})$ in strongly polynomial time.
\end{lemma}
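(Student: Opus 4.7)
The plan is to verify a simple binary-search invariant and then deduce the approximation guarantee from the termination condition. First, I would prove by induction on loop iterations that at the top of every iteration both $b_1 \leq \OPT(\mathcal{I})$ and $\OPT(\mathcal{I}) \leq (1+\epsilon/2)\, b_2$ hold. Initially, these follow from the guarantee of Armon et al.'s $3$-approximation, which produces $B$ with $B/3 \leq \OPT \leq B$. For the inductive step, if $\texttt{Solve}(\mathcal{I}, b, \epsilon/2)$ returns YES, the oracle's specification gives $\OPT \leq (1+\epsilon/2) b$, so updating $b_2 \leftarrow b$ preserves the upper-bound invariant; if it returns NO, the specification gives $\OPT > b$, and updating $b_1 \leftarrow b$ preserves the lower-bound invariant.

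Next, I would use the stopping criterion $b_2 - b_1 \leq \epsilon B / 9$. Combined with $b_1 \leq \OPT$ and $B \leq 3\OPT$, this yields $b_2 \leq \OPT + \epsilon B / 9 \leq (1 + \epsilon/3)\, \OPT$. Since the upper-bound invariant $\OPT \leq (1+\epsilon/2)\, b_2$ rules out a NO answer, the final call $\texttt{Solve}(\mathcal{I}, b_2, \epsilon/2)$ returns YES and produces a certificate solution of cost at most $(1+\epsilon/2)\, b_2 \leq (1+\epsilon/2)(1+\epsilon/3)\, \OPT$. Expanding this product gives $1 + \tfrac{5}{6}\epsilon + \tfrac{1}{6}\epsilon^2$, which is at most $1+\epsilon$ for $\epsilon \leq 1$; the regime $\epsilon > 1$ can be reduced to this case by running the algorithm with $\epsilon' = \min(\epsilon, 1)$.

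For running time, the gap $b_2 - b_1$ starts at $2B/3$ and halves every iteration until it drops below $\epsilon B / 9$, so the loop performs at most $\lceil \log_2(6/\epsilon) \rceil = O(\log(1/\epsilon))$ iterations, a bound depending only on $\epsilon$ and not on the input magnitudes. The algorithm therefore executes one $3$-approximation plus $O(\log(1/\epsilon))$ invocations of $\texttt{Solve}$, both of which are strongly polynomial by assumption, so the whole procedure is strongly polynomial. The only mildly subtle point is calibrating the multiplicative slack $\delta = \epsilon/2$ passed to the oracle together with the additive tolerance $\epsilon B / 9$ of the binary search so that their composition is absorbed by $1+\epsilon$; once those constants are fixed, the remaining checks are routine. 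Thus I do not expect a real obstacle in this lemma, since the substantive work of the paper lies in constructing $\texttt{Solve}$ rather than in proving this wrapper correct.
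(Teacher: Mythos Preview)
Your argument is essentially the paper's: maintain $b_1 \le \OPT$, use the stopping gap $b_2-b_1 \le \epsilon B/9$ together with $B \le 3\,\OPT$ to get $b_2 \le (1+\epsilon/3)\OPT$, and then bound the certificate's cost by $(1+\epsilon/2)(1+\epsilon/3)\OPT \le (1+\epsilon)\OPT$; the $O(\log(1/\epsilon))$ iteration bound is also the same.

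There is one small slip. You claim the invariant $\OPT \le (1+\epsilon/2)\,b_2$ ``rules out a NO answer'' for the final call $\texttt{Solve}(\mathcal{I},b_2,\epsilon/2)$. It does not: the oracle is permitted to answer NO whenever $\OPT > b_2$, and your invariant allows $b_2 < \OPT \le (1+\epsilon/2)b_2$. The paper closes this by exploiting \emph{determinism} of the oracle (which is in the lemma's hypothesis): either $b_2$ still equals the initial value $B$, in which case $\OPT \le B$ forces YES, or $b_2$ was last updated precisely because $\texttt{Solve}$ returned YES on that value, and a deterministic oracle returns YES again. With that one-line fix your proof is complete and matches the paper's.
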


\begin{proof}
By the definition of $\texttt{Solve}$ and the algorithm, during the algorithm, $\texttt{Solve}(\mathcal{I},b_2,\frac{\epsilon}{2})$ always returns YES, and $\texttt{Solve}(\mathcal{I},b_1,\frac{\epsilon}{2})$ returns NO unless $\texttt{Solve}(\mathcal{I},\frac{B}{3},\frac{\epsilon}{2})$ is YES and $b_1=\frac{B}{3}$.
Thus, we have $b_1\leq \OPT(\mathcal{I})$.
Therefore, the algorithm outputs the solution with cost at most $b_2 (1 + \frac{\epsilon}{2})$, which is at most $\OPT(\mathcal{I})(1 + \epsilon)$, because
\[
b_2 (1 + \frac{\epsilon}{2}) \leq (b_1+\frac{\epsilon}{9}B)(1+\frac{\epsilon}{2})\leq \OPT(\mathcal{I})(1+\frac{\epsilon}{3})(1+\frac{\epsilon}{2})\leq \OPT(\mathcal{I})(1+\epsilon).
\]
The algorithm terminates in $O(\log \frac{1}{\epsilon})$ steps because the gap $b_2-b_1$ becomes half in each step,
That completes proof.
\end{proof}

\subsection{Algorithm. Part 2: Rounding Distance}

In this and next subsections, we propose a DP algorithm for $\texttt{Solve}(\mathcal{I}, b, \delta)$.
Our algorithm maintains ``distance information'' in the indices of the DP table.
For this purpose, we round the distances so that all the vertices (thus the users and facilities) are located on the points which are distant from the root by distance multiple of positive number $t$ as follows.

For each edge $e = (v, w) \in E(T)$, where $v$ is closer to the root, we define the rounded length by $l'(e) = \lfloor \frac{d(\root,w)}{t}\rfloor-\lfloor \frac{d(\root,v)}{t}\rfloor$.
Intuitively, this moves all the vertices ``toward the root'' and regularize the edge lengths into integers. 
Then, we define the rounded distance $d'$ the metric on $\mathcal{I}'$.

This rounding process only changes the optimal value a little.
\begin{lemma}
For any pair of vertices $v, w$, $d(v,w)-2t\leq d'(v,w)t\leq d(v,w)+2t$ holds.
Especially, $|\OPT(\mathcal{I})-\OPT(\mathcal{I'})t|\leq 2t$.
\end{lemma}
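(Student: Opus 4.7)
The plan is to reduce the claim for a general pair $(v,w)$ to a telescoping computation along the $v$-to-$w$ path via the lowest common ancestor. Let $u$ be the LCA of $v$ and $w$ in the rooted tree $T$, and set $a = d(\root, v)$, $b = d(\root, w)$, $c = d(\root, u)$. Because every edge $e = (p, q)$ with $p$ the parent satisfies $d(\root, q) = d(\root, p) + l(e)$, the original distance decomposes as $d(v,w) = (a - c) + (b - c) = a + b - 2c$. Under the rounded lengths $l'$, the edge contributions along each of the two root-to-descendant paths $u \to v$ and $u \to w$ telescope, so $d'(v,w) = \lfloor a/t\rfloor + \lfloor b/t\rfloor - 2\lfloor c/t\rfloor$.

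The next step is just to write $x = t\lfloor x/t\rfloor + \{x\}_t$ with $\{x\}_t \in [0,t)$ for $x \in \{a, b, c\}$ and subtract. This gives the clean identity
\[
d(v,w) - d'(v,w)\, t = \{a\}_t + \{b\}_t - 2\{c\}_t,
\]
and the right-hand side lies in the open interval $(-2t, 2t)$ since $\{a\}_t, \{b\}_t \in [0,t)$ and $2\{c\}_t \in [0, 2t)$. This yields the two-sided bound $d(v,w) - 2t \le d'(v,w)\, t \le d(v,w) + 2t$ claimed in the first assertion.

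For the ``especially'' part, observe that the feasibility of an $r$-gathering assignment (each open facility receiving at least $r$ users) is purely combinatorial and does not depend on the metric. Hence the feasible assignments for $\mathcal{I}$ and $\mathcal{I}'$ are identical. For any fixed feasible assignment, the per-user distances satisfy the above pointwise bound, so taking the maximum over users and then the minimum over feasible assignments preserves the same $2t$-slack on both sides, giving $|\OPT(\mathcal{I}) - \OPT(\mathcal{I}')\, t| \le 2t$.

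The main step is really just the fractional-part bookkeeping; there is no genuine obstacle. The only thing to be careful about is recognizing that per-edge rounding errors need \emph{not} accumulate along the path, because using the LCA and the telescoping form of $l'$ collapses the error at intermediate vertices and leaves only three fractional parts regardless of the path length.
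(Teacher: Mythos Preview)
Your proof is correct and follows essentially the same approach as the paper: decompose via the lowest common ancestor, use the telescoping form of $l'$ so that $d'(v,w)=\lfloor a/t\rfloor+\lfloor b/t\rfloor-2\lfloor c/t\rfloor$, and bound the error by the three fractional parts (the paper phrases this as a $\pm t$ bound on each leg $d'(x,v)t$ and $d'(x,w)t$, which is the same computation). Your treatment of the ``especially'' part is slightly more explicit than the paper's one-line remark but conveys the same content.
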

\begin{proof}
Let $x$ be the lowest common ancestor of $v$ and $w$.
Then, $x$ is on the $v$-$w$ path; thus, $d(v,w) = d(x,v) + d(x,w)$ and $d'(v,w)=d'(x,v)+d'(x,w)$ hold.
Since $d(x,v)=d(\root,v)-d(\root,x), d'(x,v)=d'(\root,v)-d'(\root,x)$ and $d(\root,z)-t\leq d'(\root,z)t\leq d(\root,z)$ for all vertex $z$, we have $d(x,v)-t\leq d'(x,v)t\leq d(x,v)+t$. 
We also have $d(x,w)-t\leq d'(x,w)t\leq d(x,w)+t$ by symmetry.
Thus $d(v,w)-2t\leq d'(v,w)t\leq d(v,w)+2t$ holds.
Since the cost of the min-max $r$-gathering problem is the maximum length of some paths, the second statement follows from the first statement.
\end{proof}
This lemma implies that an algorithm that determines whether $\mathcal{I}'$ has a solution with cost at most $\frac{b+2t}{t}$ works as an oracle $\texttt{Solve}(\mathcal{I}, b, \epsilon)$ if $t = \frac{b \delta}{4}$.


\subsection{Dynamic Programming}

Now we propose an algorithm to determine whether $\mathcal{I}'$ has a solution with cost at most $\frac{b+2t}{t}$.
Since all the edge costs of $\mathcal{I}'$ are integral, without loss of generality, we replace the threshold by $K := \lfloor \frac{b+2t}{t}\rfloor$.
An important observation is that $K$ is bounded by a constant since $K\leq \frac{b+2t}{t}=\frac{4}{\delta}+2$.

Our algorithm is a dynamic programming on a tree.
For vertex $v$, arrays $P=(p_0, \dots, p_K)$ and $Q=(q_0, \dots, q_K)$, we define a boolean value $\DP[v][P][Q]$.
$\DP[v][P][Q]$ is $\texttt{true}$ if there is a way to 
\begin{itemize}
    \item open some facilities in $T_v$, and
    \item assign some users in $T_v$ to the opened facilities so that
    \item for all $i=0, \dots, K$ there are $p_i$ unassigned users in $T_v$ who are distant from $v$ by distance $i$ and no other users are unassigned, and
    \item for all $i=0, \dots, K$ we will assign $q_i$ users out of $T_v$ who are distant from $v$ by distance $i$ to open facilities in $T_v$,
\end{itemize}
and \texttt{false} otherwise.
$\DP[\root][(0, \dots, 0)][(0,\dots, 0)]$ is the solution to the problem.
The elements of $P$ and $Q$ are non-negative integers at most $n$; thus, the number of the DP states is $|V(T)| \times (n+1)^{2(K+1)}$, which remains in polynomial in the size of input.

The remaining task is to write down the transitions.
For arrays $X$ and $Y$, we denote by $X + Y$ the element-wise addition, by $X - Y$ the element-wise subtraction, and by $X \le Y$ the element-wise inequality.
We denote by $X^k$ the array produced by shifting $X$ by $k$ rightwards if $k \ge 0$ and the array produced by shifting $X$ by $|k|$ leftwards if $k < 0$; the overflowed entries are discarded.
Let $x, y$ be the two children of $v$.
We make a formula to calculate $\DP[v][P][Q]$ from the $\DP$ values for children.
Let the cost of the edges $(v,x),(v,y)$ in $\mathcal{I}'$ be $d_x,d_y$.
Then, $\DP[v][P(v)][Q(v)]$ is $\texttt{true}$ if and only if
\begin{itemize}
    \item there are arrays $P(x),Q(x),P(y),Q(y),R(x),R(y),S_1,S_2,W_1,W_2$ of integers whose lengths are $K+1$ such that 
    \item $S_1+S_2$ is $(1,0, \dots, 0)$ if there is a user on $v$ and $(0, \dots, 0)$ otherwise, and 
    \item the sum of all elements in $W_1+W_2$ is zero or at least $r$ if there is a facility on $v$ and zero otherwise, and
    \item if $W_1+W_2$ is nonzero, the sum of indices of last nonzero elements of $W_1$ and $W_2$ are at most $K$, and
    \item $R(x)\leq P(x)^{d_x},Q(y)^{d_y}$ and $R(y)^{d_y}\leq P(y)^{d_y},Q(x)^{d_x}$, and
    \item $\DP[x][P(x)][Q(x)]=\DP[y][P(y)][Q(y)]=\texttt{true}$, and 
    \item $p(x)_i=0$ for $i>K-d_x$, $q(x)_i=0$ for $i<d_x$, $p(y)_i=0$ for $i>K-d_y$, $q(y)_i=0$ for $i<d_y$, and
    \item $P(v)=P(x)^{d_x}+P(y)^{d_y}-R(x)-R(y)+S_1-S_2-W_1$, and
    \item $Q(v)=Q(x)^{-d_x}+Q(y)^{-d_y}-R(x)-R(y)+W_2$.
\end{itemize}
The meaning of the auxiliary variables $R(x), R(y), S_1, S_2, W_1, W_2$ are as follows.
\begin{itemize}
    \item The $i$-th entry of $R(x)$ (resp. $R(y)$) denotes the number of users in $T_x$ (resp. $T_y$) who are distant from $v$ by distance $i$ and assigned to the facility in $T_y$ (resp. $T_x$).
    \item $S_1$ and $S_2$ decide whether we assign the user on $v$ to an open facility in $T_v$ or remain unassigned.
    \item The $i$-th entry of $W_{1}$ (resp. $W_2$) denotes the number of users in $T_v$ (resp. outside of $T_v$) who are assigned to the facility on $v$ and distant from $v$ by distance $i$.
\end{itemize}
We can enumerate all the possibilities of the arrays in polynomial time.
Thus, the total time complexity is polynomial.
We can reconstruct the solution by storing which candidates of transitions are chosen, so we achieved to construct an algorithm what we wanted.
This gives a proof of Theorem~\ref{ptas}.


\subsection{Variants}

Our technique can be used for other variants of the $r$-gathering problems. In {\it $(r,\epsilon)$-gathering problem} \cite{aggarwal2010achieving}, we do not need to assign at most $\epsilon$ factor of users. We can construct an algorithm to solve it, just by adding the number of ignored users in $T_v$ to DP states of vertex $v$. Note that, this problem is also NP-hard and does not admit FPTAS, because we can convert $r$-gathering instance to equivalent $(r,\epsilon)$-gathering instance, just by adding the proper number of users on sufficiently far points.

We can treat the constraint on the number of open facilities just by adding the number of open facilities in $T_v$ to DP states of vertex $v$. Note that, this problem is also NP-hard and does not admit FPTAS because in the gadget construction described in our another paper \cite{wareware} we only have to decide whether there is a solution with $2d+1$ clusters, where $d$ is the number of ``long legs'' on a spider.

Here we give a theorem to conclude this subsection.

\begin{theorem}
Both min-max $(r,\epsilon)$-gathering and $r$-gathering with constraints on the number of open facilities admit strongly polynomial time approximation schema.
\end{theorem}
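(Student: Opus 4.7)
The plan is to reuse verbatim the three-part structure from the proof of Theorem~\ref{ptas}: the binary-search wrapper of Algorithm~\ref{alg}, the metric-rounding step, and the tree DP. Only the DP table changes, by augmenting each state with one additional non-negative integer coordinate that tracks the relevant ``budget'' of the variant in the current subtree. Since the binary-search and rounding arguments depend only on the behaviour of the oracle $\Solve(\mathcal{I},b,\delta)$ and not on the particular side constraints, I only have to redesign the DP for each variant and verify that the state space remains polynomial.

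For the $(r,\epsilon)$-gathering problem, I augment $\DP[v][P][Q]$ with a coordinate $u\in\{0,1,\dots,|\mathcal{U}|\}$ recording the number of users in $T_v$ that are declared unassigned. The transition reuses the same auxiliary arrays $R(x),R(y),S_1,S_2,W_1,W_2$; additionally I nondeterministically choose how many users carried up in $P(x)^{d_x}+P(y)^{d_y}$ (or sitting at $v$ itself) are reclassified as ignored, subtract them from the formula for $P(v)$, and set $u(v)=u(x)+u(y)+(\text{number newly ignored at this level})$. At the root I demand $P=Q=0$ together with $u\le\lfloor\epsilon|\mathcal{U}|\rfloor$. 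For the variant with a bound $k$ on the number of open facilities, I augment the state instead by $f\in\{0,1,\dots,k\}$ counting open facilities in $T_v$, update it by $f(v)=f(x)+f(y)+[v\text{ is open}]$, where ``$v$ is open'' coincides with the existing condition that $W_1+W_2$ is nonzero, and declare the root infeasible whenever $f>k$.

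In both cases the state space grows by only a factor of $|\mathcal{U}|+1$ or $|\mathcal{F}|+1$, so it is still polynomial in $|\mathcal{U}|+|\mathcal{F}|$ for fixed $\epsilon$, and the strongly polynomial running time inherited from Theorem~\ref{ptas} is preserved. The main point to verify is the correctness of the augmented transitions: for the $(r,\epsilon)$-variant one has to check that ignoring a user does not circumvent the distance threshold imposed by $p(x)_i=0$ for $i>K-d_x$ (handled by only permitting ignoring at the level where the user currently appears in $P$), and for the facility-count variant that the ``zero users'' branch of the ``zero or at least $r$'' condition is still counted as opening a facility when it is so intended. Once these bookkeeping details are in place, both DPs plug into Algorithm~\ref{alg} with $\delta=\epsilon/2$ and yield the claimed PTASes.
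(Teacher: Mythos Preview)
Your proposal is correct and follows essentially the same approach as the paper: the paper's proof consists precisely of the remark that one augments the DP state of each vertex $v$ with the number of ignored users in $T_v$ (for the $(r,\epsilon)$-variant) or the number of open facilities in $T_v$ (for the bounded-facility variant), and then plugs the resulting oracle into the same binary-search/rounding framework. Your write-up supplies more detail on the transitions than the paper does, but the underlying construction is identical.
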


We can also straightforwardly combine these additional states to solve combined problems.

\section{Polynomial-Time Algorithms for other variants}

In contrast to the min-max $r$-gathering, there are variants which can be solved in polynomial-time in tree. In this section, we introduce them.

\subsection{min-sum $r$-Gathering and Lower Bounded Facility Location Problem}

Now we consider the other objective function -- not min-max, but min-sum. We can also introduce the cost to open facility $c(f)$ for each facility $f$: the total cost is the sum of the distance between users and assigned facilities, and the sum of $c(f)$ over all open facilities. In this situation, the problem is so-called {\it lower bounded facility location problem}~\cite{svitkina2010lower}. For the general metric case, $448$-approximation algorithm was given in~\cite{svitkina2010lower}. Later, the approximation ratio is improved to $82.6$~\cite{ahmadian2012improved}.

Unlike the min-max case, we can solve this problem exactly on a tree in polynomial time. 
For each vertex $v$ and an integer $x$, such that $-|\mathcal{U}|\leq x \leq |\mathcal{U}|$, let us define the value $\DP[v][t]$ by the minimum total cost in following situation.

\begin{itemize}
    \item If $t\geq 0$, all but $t$ users in $T_v$ are assigned to facilities in $T_v$, all open facilities in $T_v$ has at least $r$ users, and we will assign remaining $t$ users to facilities out of $T_v$. In other words, $t$ users go upwards from $v$, and no users go downwards to $v$.
    \item Otherwise, all users in $T_v$ are assigned to facilities in $T_v$, and we will assign additional $|t|$ users out of $T_v$ to the facilities in $T_v$. In other words, $|t|$ users go downwards to $v$, and no users go upward from $v$.
\end{itemize}

We want the value $\DP[\root][0]$. Following observation ensures we can get an optimal solution by calculating DP values in a bottom-up way.

\begin{lemma}
There is an optimal solution, that for each edge $e$, all users who pass through the edge $e$ when they go to the assigned facilities pass through $e$ in the same direction.
\end{lemma}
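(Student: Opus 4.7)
The plan is an exchange argument: starting from any optimal solution, I show that whenever some edge $e$ has users traversing it in both directions, a local swap strictly reduces the bidirectional traffic on $e$ without breaking feasibility and without increasing the objective. Iterating until no such edge exists yields the claimed optimal solution, provided we can exhibit a strictly decreasing termination potential.

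Concretely, fix an edge $e=(u,v)$ with $u$ closer to the root, so that removing $e$ splits the tree into the subtree $T_v$ and its complement. Suppose a user $u_1\in T_v$ is assigned to a facility $f_1\notin T_v$ while a user $u_2\notin T_v$ is assigned to a facility $f_2\in T_v$; if no such pair exists for any edge, we are already done. Otherwise, reassign $u_1\mapsto f_2$ and $u_2\mapsto f_1$. The number of users assigned to each facility is unchanged, so the lower bound of $r$ still holds and the set of open facilities (hence the facility-opening cost) is preserved. For the assignment cost, the tree metric gives
\[
d(u_1,f_1)+d(u_2,f_2)=d(u_1,v)+d(v,f_2)+d(u_2,u)+d(u,f_1)+2\ell(e),
\]
while the triangle inequality yields $d(u_1,f_2)\le d(u_1,v)+d(v,f_2)$ and $d(u_2,f_1)\le d(u_2,u)+d(u,f_1)$. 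Hence the swap decreases the total cost by at least $2\ell(e)\ge 0$, so the new solution is still optimal.

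To guarantee termination, let $\Phi$ be the sum over all users of the number of edges on the path to the assigned facility. Applying the same triangle inequality to hop counts shows that after the swap the new paths of $u_1$ and $u_2$ no longer use $e$, so $\Phi$ drops by at least two per swap. Since $\Phi$ is a nonnegative integer, only finitely many swaps occur, and the resulting solution is optimal with no edge carrying users in both directions. The main subtlety lies exactly here: when $\ell(e)=0$ the objective itself does not strictly decrease, so the combinatorial potential $\Phi$ is needed rather than the cost; feasibility preservation and the tree triangle inequality are otherwise routine.
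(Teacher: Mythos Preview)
Your proof is correct and follows essentially the same exchange argument as the paper: swap two users crossing an edge $e$ in opposite directions, observe that feasibility and optimality are preserved, and use the total hop-count as a strictly decreasing potential to guarantee termination. The paper states this more tersely, but your explicit cost computation and handling of the $\ell(e)=0$ case via the combinatorial potential $\Phi$ fill in exactly the details the paper leaves implicit.
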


\begin{proof}
Assume the users $u,u'$ go to the facilities $f,f'$, respectively, and they pass through the edge $e$ in the opposite direction. Then, we can decrease the sum of the number of edges the user pass through among all users, by reassigning $u$ to $f'$ and $u'$ to $f$, without increasing the total cost and breaking feasibility.
\end{proof}

Let us write down the transitions. Denote two children of $v$ by $x,y$, and distance between $x,y$ and $v$ by $d_x,d_y$. We also denote the number of users on $v$ by $u_v$. Then, $\DP[v][t]$ is calculated by
\[
\min_{k}(\DP[x][k]+\DP[y][t-u_v-k]+|k|d_x+|t-u_v-k|d_y)
\]
if $v$ contains no facilities. If $v$ contains a facility $f$, we also decide whether to open $f$. Thus, we additionally take a minimum to the value $c(f)+\min_{k\geq t+r}\DP[v][k]$. We can implement this algorithm to work in $O(|V(T)||\mathcal{U}|^2)$ time. Since $|V(T)|=O(|\mathcal{U}|+|\mathcal{F}|)$, we get the following theorem.

\begin{theorem}
min-sum $r$-gathering problem and lower bounded facility location problem on a tree admit an exact $O((|\mathcal{U}|+|\mathcal{F}|)|\mathcal{U}|^2)$ time algorithm.
\end{theorem}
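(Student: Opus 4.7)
The plan is to verify that the DP defined just above the theorem computes the optimum value and runs within the claimed time bound. The starting point is the directional lemma: in some optimal solution, for every edge $e$ all users whose assignment path uses $e$ traverse $e$ in the same direction. This is exactly the invariant that allows a single integer $t$ per vertex $v$ to summarize the interaction between $T_v$ and the rest of the tree, with the sign of $t$ indicating whether users are flowing up out of $T_v$ or down into $T_v$.

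Next I would verify correctness of the recurrence. For the leaves, $\DP[v][t]$ is easy to write down directly from the lower-bound constraint $r$ and the presence of a user or facility at $v$. For an internal vertex $v$ with children $x,y$ at distances $d_x,d_y$, any partial solution for $T_v$ decomposes as a partial solution for $T_x$ with some imbalance $k$, a partial solution for $T_y$ with imbalance $t-u_v-k$ (accounting for the $u_v$ users sitting at $v$), plus the edge contributions $|k|d_x + |t-u_v-k|d_y$ for the users that cross $(v,x)$ or $(v,y)$; by the directional lemma these costs are correctly counted with absolute values. Minimizing over the admissible split $k$ yields the no-facility formula. If $v$ hosts a facility $f$, I additionally have the option to open it; opening it lets us absorb at least $r$ users from $T_v$ that would otherwise go upward, and this option is captured by the term $c(f) + \min_{k \ge t+r} \DP[v][k]$ taken as an alternative minimum. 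A straightforward induction on $T_v$ then shows $\DP[v][t]$ equals the minimum cost over configurations of the type in the definition, and the answer is $\DP[\root][0]$.

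For the running time, the state space has $O(|V(T)|\cdot|\mathcal{U}|)$ entries, since $t$ ranges over $[-|\mathcal{U}|,|\mathcal{U}|]$. Computing a single entry requires minimizing over $k$ in a range of size $O(|\mathcal{U}|)$, and the facility-opening alternative can be evaluated in amortized constant time per $t$ by maintaining suffix minima of $\DP[v][\cdot]$ in a single pass. Thus each vertex is processed in $O(|\mathcal{U}|^2)$ time, for a total of $O(|V(T)|\cdot|\mathcal{U}|^2) = O((|\mathcal{U}|+|\mathcal{F}|)|\mathcal{U}|^2)$ as claimed, using the paper's earlier reduction that makes $T$ a full binary tree with $|V(T)| = O(|\mathcal{U}|+|\mathcal{F}|)$.

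The main obstacle I expect is the bookkeeping in the correctness argument, specifically showing that the DP really explores every solution compatible with the directional lemma, in particular that the constraint ``$k \ge t + r$'' when opening $f$ is both necessary (the $r$-lower bound) and sufficient (any surplus beyond $r$ can be routed to $f$ without violating feasibility elsewhere). Once that is pinned down cleanly via the directional lemma, combining the children's states and the local decision at $v$ is routine, and the time analysis is immediate.
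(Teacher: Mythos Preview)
Your proposal is correct and follows essentially the same approach as the paper: the directional lemma justifies the single-integer state $t$, the recurrence you verify is exactly the one the paper writes down, and your time analysis (state space $O(|V(T)|\cdot|\mathcal{U}|)$, $O(|\mathcal{U}|)$ work per state, suffix minima for the facility-opening term) matches the paper's $O(|V(T)||\mathcal{U}|^2)$ claim. The paper itself gives no more detail than you do; if anything, your plan is already more explicit about the induction and the amortization for the $\min_{k\ge t+r}$ term than the paper's own argument.
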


\subsection{Proximity Requirement}

In real applications, it is natural to assume that users go to their nearest open facilities. This requirement is called {\it proximity requirement}. It is discussed in Armon's paper~\cite{armon2011min} for min-max $r$-gathering problem and they gave a $9$-approximation algorithm. We assume that for all user $u$, there is no tie among the distances from $u$ to the facilities. That ensures the users uniquely determine the facility that they go. Especially, there is a positive distance between two distinct facilities.

Unlike the vanilla $r$-gathering, We can solve this problem exactly in polynomial time on a tree. The key observation is the following fact.

\begin{lem}
Assume that the user $u,u'$ go to the facility $f,f'$, respectively, in a feasible solution. If $u-f$ path and $u'-f'$ path have a common point, $f=f'$.
\end{lem}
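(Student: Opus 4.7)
The plan is to derive the conclusion by a contradiction with the no-tie hypothesis whenever $f \ne f'$. Pick any point $p$ shared by the $u$--$f$ path and the $u'$--$f'$ path. Because the tree metric is additive along the unique simple path between two points, and $p$ sits on both of the assignment paths,
\[
d(u,f) = d(u,p) + d(p,f), \qquad d(u',f') = d(u',p) + d(p,f').
\]
For any other pair of vertices we only have the generic tree triangle inequality $d(u,f') \le d(u,p) + d(p,f')$ and $d(u',f) \le d(u',p) + d(p,f)$, which may be strict when $p$ is off the corresponding path.

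Next, I would invoke the proximity requirement: each user is assigned to the strictly nearest open facility, so $d(u,f) \le d(u,f')$ and $d(u',f') \le d(u',f)$. Chaining the first with the above triangle inequality through $p$ yields
\[
d(u,p) + d(p,f) = d(u,f) \le d(u,f') \le d(u,p) + d(p,f'),
\]
so $d(p,f) \le d(p,f')$. The symmetric computation starting from $u'$ gives $d(p,f') \le d(p,f)$, hence $d(p,f) = d(p,f')$.

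Substituting this equality back, $d(u,f') \le d(u,p) + d(p,f') = d(u,p) + d(p,f) = d(u,f)$, and combined with $d(u,f) \le d(u,f')$ we conclude $d(u,f) = d(u,f')$. By the standing no-tie assumption on distances from $u$ to the set of facilities, this is only possible when $f = f'$, completing the argument. I do not anticipate a real obstacle here; the only subtlety to keep in mind while writing the proof is to carefully distinguish when $p$ being on a path upgrades the triangle inequality to an equality from when it does not, since reversing that distinction would make the chain of inequalities collapse trivially and hide where the no-tie hypothesis actually enters.
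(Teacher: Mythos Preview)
Your argument is correct and follows essentially the same route as the paper's proof: both exploit the additivity of the tree metric along the shared point together with the proximity and no-tie hypotheses. The only cosmetic difference is that the paper argues by contradiction with proximity (assuming WLOG $d(c,f)<d(c,f')$ forces $u'$ to prefer $f$), whereas you squeeze out $d(p,f)=d(p,f')$ first and contradict the no-tie assumption; your write-up is in fact more carefully justified than the paper's terse version.
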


\begin{proof}
Denote this common point by $c$. Since $d(u,f)\neq d(u,f')$, $d(c,f)\neq d(c,f')$ holds. Without loss of generality, we can assume that $d(c,f)<d(c,f')$. It means both $u$ and $u'$ should go to $f$.
\end{proof}

By the above lemma, we can argue that if there are two users who go to the same facility, so do all the users between them. From now, we construct an algorithm by dynamic programming.

For each vertex $v$, facility $f$, and integer $0\leq t\leq r$, we calculate the value $\DP[v][f][t]$, which represents the minimum possible cost to assign all users in $T_v$ and decide whether to open each facilities in $T_u$ and $f$, in situation
\begin{itemize}
    \item there are at least $t$ users assigned to $f$,
    \item the nearest open facility from $v$ is $f$,
    \item users in $T_v$ is assigned to the facilities in $T_v$ or $f$,
    \item and all open facilities in $T_v$ but $f$ have at least $r$ users.
\end{itemize}
If there is no solution which satisfies above conditions, this value is $\infty$. We calculate these values in a bottom-up way. 

We want the minimum value $\DP[\root][f][r]$ among all facility $f$. Let us write down the transitions. Let two children of the vertex $v$ be $x$ and $y$, and the number of users on vertex $v$ be $u_v$. Let $cost(v,f)$ be $d(v,f)$ when there are users on $f$ and $0$ when there is no user on $f$. $\DP[v][f][t]$ is calculated by the following minimum. 
\begin{itemize}
    \item $\max(\DP[x][f][k],\DP[y][f][l],cost(v,f))$ for all $k+l+u_v\geq t$. That corresponds to the case remaining users in $T_x,T_y$ are assigned in $f$.
    \item $\max(\DP[x][f][t'],\DP[y][f'][r],cost(v,f))$ for all $t'+u_v\geq t$ and facility $f'$, which satisfies $d(v,f)\leq d(v,f')$ and $d(y,f)\geq d(y,f')$. That corresponds to the case remaining users in $T_x$ are assigned to $f$ and we finish to choose users assigned to $f'$.
    \item $\max(\DP[x][f'][r],\DP[y][f][t],cost(v,f))$ for all $t'+u_v\geq t$ and facility $f'$, which satisfies $d(v,f')\geq d(v,f')$ and $d(x,f')\leq d(x,f)$. That corresponds to the opposite case described above.
    \item $\max(\DP[x][f_x][r],\DP[y][f_y][r],cost(v,f))$ for all facilities $f_x,f_y$, which satisfies $d(v,f_x)\geq d(v,f)$ and $d(v,f_y)\geq d(v,f)$. That corresponds to the case which we finish to choose the users assigned to $f_x,f_y$.
\end{itemize}
We can calculate all these transitions in $O((r+|\mathcal{F}|)^2)$ time for each vertex $v$, so we can solve this problem in $O(|V(T)|(r+|\mathcal{F}|)^2)$ time. Note that, min-sum version of this problem can be solved in the same way. Here we conclude this subsection by the following theorem.

\begin{theorem}
min-max and min-sum $r$-gathering with proximity requirement admit an exact $O((|\mathcal{U}|+|\mathcal{F}|)(r+|\mathcal{F}|)^2)$ time algorithm.
\end{theorem}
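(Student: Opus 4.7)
The plan is to formalize the dynamic program $\DP[v][f][t]$ outlined in the preceding paragraphs and verify its correctness under the proximity assumption. Vertices are processed bottom-up in the rooted binary tree. The state $\DP[v][f][t]$ stores the minimum cost of a partial assignment on $T_v$ satisfying the four bulleted constraints: $f$ is the unique nearest open facility to $v$, at least $t$ users have already been committed to $f$ from inside $T_v \cup \{f\}$, every other open facility in $T_v$ already has $\geq r$ users, and every user of $T_v$ is either assigned inside $T_v$ or earmarked for $f$. The final answer is $\min_f \DP[\root][f][r]$, since at the root the outstanding cluster for $f$ must already be complete.

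The key structural input is the lemma just proved: if two users on a common tree path are assigned to two different open facilities, their $u$-to-facility paths share no edge. Consequently, for any vertex $v$, the set of users of $T_v$ routed out of $T_v$ all share the same destination $f$, which, by the proximity rule, is the nearest open facility to $v$. This is what justifies indexing a DP state by a single facility $f$ together with a single integer $t \le r$: no additional combinatorial state is needed to describe which outside users $T_v$ will absorb.

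Next I would verify the four transition cases exhaustively. Case~(i) glues two children whose unfinished clusters both point to $f$; Cases~(ii) and~(iii) handle the asymmetric situation where one child still pushes users upward for $f$ while the other has a self-contained cluster rooted at some $f'$ that is no nearer to $v$ than $f$ (so $f$ remains nearest to $v$) yet is nearer to that child than $f$ (so no user below it prefers $f$); Case~(iv) is when both children are self-contained, each using a facility no nearer to $v$ than $f$. I would argue that any feasible assignment decomposes uniquely into one of these patterns at every internal vertex, using the lemma and the no-tie proximity assumption. The min-sum variant is then immediate by replacing the three-way $\max$ with a sum in each recurrence.

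The main obstacle I anticipate is the careful bookkeeping of the proximity conditions at each merge: one must exclude the possibility that some user in $T_x$ strictly prefers an open facility sitting in $T_y$ over the designated $f$, and symmetrically. The inequalities $d(v,f') \ge d(v,f)$ and $d(x,f') \le d(x,f)$ appearing in the transitions are precisely what block these forbidden configurations; I would verify them using uniqueness of paths in a tree together with the above lemma applied to the potentially conflicting user. For the running time, each vertex $v$ contributes $O((r+|\mathcal{F}|)^2)$ work for the transition enumeration, so the total cost is $O(|V(T)|(r+|\mathcal{F}|)^2) = O((|\mathcal{U}|+|\mathcal{F}|)(r+|\mathcal{F}|)^2)$, matching the stated bound.
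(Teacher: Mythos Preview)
Your proposal is correct and follows essentially the same approach as the paper: the paper's argument for this theorem is precisely the $\DP[v][f][t]$ dynamic program you describe, justified by the same path-intersection lemma, with the same four merge cases and the same per-vertex $O((r+|\mathcal{F}|)^2)$ cost bound; the min-sum variant is handled identically by replacing $\max$ with a sum.
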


\bibliographystyle{plainurl}
\bibliography{bib.bib}

\end{document}